\begin{document}
\begin{frontmatter}
  \title{A Note on the Category of c-spaces} 
  \author{Zhenchao Lyu\thanksref{a}\thanksref{ALL}\thanksref{2}}	
   \author{Xiaolin Xie\thanksref{a}\thanksref{ALL}\thanksref{3}}
   \author{Hui Kou\thanksref{a}\thanksref{ALL}\thanksref{4}}
   \address[a]{Department of Mathematics\\ Sichuan University\\				
    Chengdu, 610064, China}  							
  \thanks[ALL]{Supported by the NSF of China (Nos. 11871353, 12001385) and ``the Fundamental Research Funds for the Central Universities" No. 2021SCU12108}   
   \thanks[2]{Email: \href{zhenchaolyu@scu.edu.cn} {\texttt{\normalshape
       zhenchaolyu@scu.edu.cn}}} 
   
  \thanks[3]{Email:  \href{xxldannyboy@163.com} {\texttt{\normalshape
        xxldannyboy@163.com}}}
  \thanks[4]{Email:  \href{kouhui@scu.edu.cn} {\texttt{\normalshape
   			kouhui@scu.edu.cn}}}
    
\begin{abstract} 
 We prove that the category of c-spaces with continuous maps is not cartesian closed. As a corollary, it follows that the category of locally finitary compact spaces with continuous maps also is  not cartesian closed.\\

\end{abstract}
\begin{keyword}
directed space, c-space, cartesian closed, locally finitary compact
\end{keyword}
\end{frontmatter}
\section{Introduction}\label{intro}

Many people have been trying to extend domain theory to general topological spaces, see \cite{ershov73,Erne91,Ershov93,Goubault13,Battenfeld07}. Directed spaces are introduced by Kou and Yu independently \cite{KouY15} in 2014 for generalizing the concept of Scott spaces, which is equivalent to that of $T_{0}$ monotone determined spaces introduced by Ern\'e \cite{Erne09}. In the same paper Kou and Yu proved that the category of directed spaces with continuous maps (\textbf{DTop} for short) is cartesian closed. There are many important directed spaces in domain theory, for instance locally finitary compact spaces are directed spaces; in particular  c-spaces and Alexandroff spaces are directed spaces. Since the category of continuous domains is not cartesian closed, and since the position of the category of c-spaces in the category of directed spaces is similar to that of continuous domains in dcpos \cite{Chen22}, a natural question arises: Is the category of c-spaces cartesian closed? In this short note, we  answer this question in the negative.


\section{Preliminaries}
We refer to~\cite{Gierz03,JungA94,Goubault13} for the standard definitions and notations in order theory, topology and domain theory. A partially ordered set $D$ is called a dcpo if every directed subset of $D$ has a supremum in $D$. A upper set $U$ is called a Scott open set if for any directed set $A\subseteq D$, $\bigvee A\in U$ implies $A$ intersects $U$.

For a topological space $X$,  we use $\mathcal O(X)$ to denote the lattice of open subsets of~$X$. We require that all topological spaces are $T_{0}$ in this note. Let $X$ be a $T_{0}$ space, the specializing order $\leq$ is defined as follows $\colon$ $x\leq y$ if $x$ belongs to the closure of point $y$. 
A topological space is a \emph{c-space} if  for any $x\in X$ and any open neighbourhood $U$ of $x$, there is a point $y\in U$ such that $x\in \text{int}\,({\uparrow}y)$. A space $X$ is locally finitary compact if for any $x\in X$ and its open neighborhood $U$, there is a finite subset $F$ of $U$ such that $x\in \text{int}\,({\uparrow}F)$. 

Let $X$ be a $T_{0}$ space and $\leq$  the specialization order over $X$.  A topological space $X$ is called a Scott space if $(X,\leq)$ is a dcpo and the topology on $X$ is equal to the Scott topology on $(X,\leq)$.  
Every directed set $D$ of $X$ under specialization order can be regarded as a monotone net, we say $D$ converges to $x$ iff for every open neighborhood $U$ of $x$, $D\bigcap U\neq\emptyset$.
We say that $V$ is a directed open set of $X$ if for all  directed set $D$ which converges to some point of $V$, then $D\bigcap V\neq\emptyset$. It is easy to see that every directed open set is an upper set.

\begin{definition}\cite{KouY15}
 Let $X$ be a $T_0$ space. If every directed open set of $X$ is also an open set, then we say that $X$ is a directed space. 
\end{definition}

There are many important spaces in domain theory which also are directed spaces.
\begin{example}	
	\begin{enumerate}
		\item Every poset with Scott topology is a directed space.
		\item All c-spaces are directed spaces. In particular, every Alexandroff space is a directed space.
		\item All locally finitary compact spaces are directed spaces. By the way, every c-space is locally finitary compact.
	\end{enumerate}
\end{example}

Next we introduce the concept of the exponential object in general category.
\begin{definition}
	Given two objects $X,Y$ in a category $\mathcal{C}$ with binary products, an exponential object, if it exists, is an object $Y^{X}$ with a morphism $App\colon Y^{X}\times X\rightarrow Y$ such that for every morphism $f\colon Z\times X\rightarrow Y$, there is a unique morphism $\bar{f}\colon Z\rightarrow Y^{X}$ such that the following diagram commutes:

	\begin{equation*}
		\begin{tikzcd}[row sep=huge, column sep = huge]
			Z\times X  \arrow[dr, "f"'] \arrow[r, "\bar{f}\times id_{X}"] & Y^{X}\times X \arrow[d, "App"] \\
			& Y
		\end{tikzcd}
	\end{equation*}
	
\end{definition}

The following result describes the underlying set of the exponential object in \textbf{Top}.
\begin{proposition}\rm\cite{Goubault13}
	Let $\mathcal{C}$ be any full subcategory of \rm\textbf{Top} with finite products, and assume that $1=\{\star\}$ is an object of $\mathcal{C}$. Let $X,Y$ be two objects of $\mathcal{C}$ that have an exponential object $Y^{X}$ in $\mathcal{C}$.
	
	Then there is a unique homeomorphism $\theta\colon Y^{X}\rightarrow [X\rightarrow Y]$, for some unique topology on $[X\rightarrow Y]$ (the set of all continuous functions from $X$ to $Y$) , such that $App(h,x)=\theta(h)(x)$ for all $h\in Y^{X}, x\in X$.
	
	Moreover, $\bar{f}(z)$ is the image by $\theta^{-1}$ of $f(z,\_)$ for all $f\colon Z\times X\rightarrow Y, z\in Z$. 
\end{proposition}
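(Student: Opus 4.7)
The plan is to define $\theta\colon Y^{X}\to [X\to Y]$ by the formula $\theta(h)(x):=App(h,x)$, then show it is a well-defined bijection, transport the topology of $Y^{X}$ along $\theta$, and check uniqueness.

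First, I would verify that $\theta(h)$ is actually continuous for every $h\in Y^{X}$, so $\theta$ does land in $[X\to Y]$. A point $h\in Y^{X}$ corresponds to a continuous map $\hat h\colon 1\to Y^{X}$ (which belongs to $\mathcal{C}$ since $\mathcal{C}$ is full in \textbf{Top}), and then $\theta(h)$ factors as
\begin{equation*}
X \xrightarrow{\ \iota\ } 1\times X \xrightarrow{\ \hat h\times \mathrm{id}_{X}\ } Y^{X}\times X \xrightarrow{\ App\ } Y,
\end{equation*}
where $\iota$ is the canonical homeomorphism. So $\theta(h)$ is continuous.

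Second, I would establish bijectivity by applying the universal property with $Z=1$. For surjectivity, a continuous $g\colon X\to Y$ yields a continuous $g\circ\mathrm{pr}_{2}\colon 1\times X\to Y$, which via the universal property produces a unique morphism $1\to Y^{X}$, i.e.\ a point $h\in Y^{X}$, and chasing the commuting triangle gives $App(h,x)=g(x)$, hence $\theta(h)=g$. For injectivity, if $\theta(h_{1})=\theta(h_{2})$ then the two morphisms $1\to Y^{X}$ picking out $h_{1}$ and $h_{2}$ both factor the common map $(\star,x)\mapsto App(h_{i},x)$ through $App$, so the uniqueness clause of the exponential forces $h_{1}=h_{2}$.

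Third, I would topologise $[X\to Y]$ by declaring $U$ open iff $\theta^{-1}(U)$ is open in $Y^{X}$; this makes $\theta$ a homeomorphism by construction. Uniqueness of the topology is automatic, because the underlying function $\theta$ is itself forced by the relation $\theta(h)(x)=App(h,x)$, and once $\theta$ is fixed a homeomorphism determines its target topology. The moreover clause is then a one-line check: commutativity of the defining triangle gives $App(\bar f(z),x)=f(z,x)$ for all $x$, i.e.\ $\theta(\bar f(z))=f(z,\_)$, whence $\bar f(z)=\theta^{-1}(f(z,\_))$.

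The one delicate point I anticipate is step one, the continuity of $\theta(h)$: everything hinges on being able to represent elements of the exponential by morphisms out of the terminal object $1$ and then pre-composing with $App$. The hypothesis that $1\in\mathcal{C}$ is used exactly here and in step two. The remainder of the argument is a formal consequence of the universal property.
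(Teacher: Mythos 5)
Your argument is correct and is essentially the standard proof of this proposition (the paper itself only cites \cite{Goubault13} and gives no proof): represent points of $Y^{X}$ by morphisms out of the terminal object $1$, obtain continuity of $\theta(h)$ by composing with $App$, get bijectivity from the $Z=1$ instance of the universal property, transport the topology along $\theta$, and read the ``moreover'' clause off the commuting triangle. The only step worth making explicit is that the categorical product in $\mathcal{C}$ need not carry the product topology, but since $1$ is an object of $\mathcal{C}$ and $\mathcal{C}$ is full, its underlying set is canonically the set of pairs and the projections are the usual ones, so your point-level computations such as $(\hat{h}\times \mathrm{id}_{X})(\star,x)=(h,x)$ and $App(\bar{f}(z),x)=f(z,x)$ are legitimate.
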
 	

Remark: By the above result, we always let the exponential object in $\mathcal{C}$ be the set $[X\rightarrow Y]$ with some unique topology if it exists.

\begin{theorem}\rm\cite{KouY15}
	The category of directed spaces and continuous maps is cartesian closed.
\end{theorem}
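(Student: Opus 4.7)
The plan is to establish cartesian closedness of \textbf{DTop} in three stages: (i) construct binary products inside \textbf{DTop}, (ii) define an exponential object on the set $[X\to Y]$ of continuous maps with an appropriate topology, and (iii) verify the universal property. The guiding idea is that \textbf{DTop} is a reflective subcategory of $\mathbf{Top}_0$: for any $T_0$ space $W$, the collection of directed open sets of $W$ already forms a topology (one checks that arbitrary unions and finite intersections of directed opens are directed open), giving a "directification" functor $d\colon \mathbf{Top}_0\to\mathbf{DTop}$ left adjoint to the inclusion. Limits in \textbf{DTop} are then obtained by applying $d$ to the corresponding limits in $\mathbf{Top}_0$.

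For binary products, I would first define $X\times Y$ in \textbf{DTop} to be the underlying product set equipped with the directification of the usual product topology. Using the adjunction $d\dashv {\hookrightarrow}$, continuous maps $Z\to X\times Y$ out of a directed space $Z$ correspond bijectively to pairs of continuous maps, so this is indeed the categorical product. The specialization order on $X\times Y$ is the coordinatewise order, and directed sets in $X\times Y$ converge coordinatewise; this will be essential in the exponential construction.

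For the exponential, take $Y^X$ to be the set $[X\to Y]$ of continuous maps, and declare a subset $\mathcal U\subseteq [X\to Y]$ to be open iff it is upward closed in the pointwise order and, for every directed family $\{f_i\}_{i\in I}$ of continuous maps whose pointwise supremum $f$ (defined as the pointwise limit in the specialization order and which one must show is continuous) lies in $\mathcal U$, some $f_i$ already lies in $\mathcal U$. Equivalently, one topologizes $[X\to Y]$ by the directification of pointwise convergence. One must check that this actually yields a directed space; this follows because directed sets in $[X\to Y]$ under the specialization order converge exactly pointwise, so "directed open" and "open" agree by construction.

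The heart of the argument is verifying the universal property. Continuity of evaluation $\mathrm{App}\colon Y^X\times X\to Y$ requires showing that for a directed set $\{(f_i,x_i)\}$ converging to $(f,x)$ in the product, the image $\{f_i(x_i)\}$ converges to $f(x)$ in $Y$; this uses jointly that $f_i\to f$ pointwise, that each $f_i$ is continuous, and that the product topology in \textbf{DTop} captures coordinatewise directed convergence. For the transpose, given continuous $g\colon Z\times X\to Y$, define $\bar g(z)(x)=g(z,x)$; continuity of each $\bar g(z)\colon X\to Y$ is immediate, and continuity of $\bar g\colon Z\to Y^X$ reduces to showing that directed sets in $Z$ are sent to pointwise-directed families in $[X\to Y]$, which follows from continuity of $g$ in the first variable. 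Uniqueness of $\bar g$ is forced by $\mathrm{App}(\bar g(z),x)=g(z,x)$. I expect the main technical obstacle to be the symmetric check that the exponential topology is exactly "directed," so that continuity on directed sets upgrades to topological continuity; this is where the directification step in the product and function-space constructions is indispensable.
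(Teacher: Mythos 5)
Note first that the paper does not prove this theorem; it is quoted from \cite{KouY15}, and your outline follows the standard route of that proof (coreflect onto directed spaces, take the directified topological product, and put a directification-type topology on $[X\to Y]$). Within that route, however, two points need repair. The first is the direction of the adjunction: the directification $d$ \emph{adds} open sets (every open set is directed open), the identity $d(W)\to W$ is continuous, and for a directed space $Z$ a map $Z\to W$ is continuous iff $Z\to d(W)$ is. Hence $d$ is \emph{right} adjoint to the inclusion, i.e.\ \textbf{DTop} is coreflective, not reflective, in $\mathbf{Top}_0$. The consequences you actually use --- maps out of a directed space into $d(X\times Y)$ correspond to maps into the topological product, and products (limits) in \textbf{DTop} are obtained by applying $d$ to the corresponding constructions in $\mathbf{Top}_0$ --- are exactly what coreflectivity gives; under the reflectivity you state, limits would be computed as in $\mathbf{Top}_0$ and your product construction would not follow.

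The second, more serious, issue is your primary definition of the exponential topology via ``pointwise suprema''. A directed space need not be a dcpo (e.g.\ $\mathbb{N}$ with the Alexandroff topology is a c-space, hence directed), so a pointwise-order-directed family of continuous maps need not have a pointwise supremum, let alone a continuous one; and even when suprema exist, convergence of a directed set $D$ to $x$ in a directed space only means that every open neighbourhood of $x$ meets $D$, which is not the same as ``$x$ is the supremum''. So the supremum-based clause is not a usable definition, and it is \emph{not} equivalent to your fallback description. That fallback is the correct one and should be taken as the definition: $U\subseteq[X\to Y]$ is open iff $U$ is an upper set for the pointwise order and $U$ meets every pointwise-order-directed family $(f_i)$ which converges pointwise (topologically, at every argument) to some member of $U$; since for directed families pointwise convergence at each point coincides with convergence in the product (pointwise) topology, this is precisely the directification of the pointwise-convergence topology, hence automatically a directed space. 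With this definition your verification goes through essentially as sketched, with two small additions you should make explicit: continuity of each $\bar g(z)=g(z,-)$ uses that $x\mapsto(z,x)$ is continuous into $d(Z\times X)$ (not just into the topological product), and continuity of $\bar g$ needs not only that images of directed sets are pointwise-directed but that they converge pointwise to $\bar g(z)$, which follows by applying $g$ to the directed set $\{(d,x):d\in D\}$ converging coordinatewise to $(z,x)$.
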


Next, we build a relationship between directed spaces and Scott spaces, which will be used later.

\begin{definition}
	Let $X$ be a $T_0$ space. If $X$ with the specialization order is a dcpo and every open set of $X$ is Scott open in $(X,\leq)$. Then we say that $X$ is a d-space.
\end{definition}

\begin{lemma}
	A directed space is a Scott space iff it is a d-space.
\end{lemma}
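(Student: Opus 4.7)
The plan is to prove the two implications separately, with the forward direction being essentially immediate from the definitions and the reverse direction requiring a small lemma about convergence of directed sets in a d-space.

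For the forward direction, if $X$ is a directed space that is also a Scott space, then $(X,\leq)$ is a dcpo and the topology on $X$ is exactly the Scott topology; hence every open set is Scott open by construction, so $X$ is a d-space. Nothing further is needed here.

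The reverse direction is the substantive part. Assume $X$ is a directed space and a d-space. Since $X$ is a d-space, $(X,\leq)$ is a dcpo, so it remains to check that the topology of $X$ coincides with the Scott topology. One inclusion is the d-space hypothesis itself: every open set of $X$ is Scott open. For the other inclusion I would first establish the following intermediate claim: in a d-space, a directed set $D$ converges to a point $x$ (in the topology of $X$) if and only if $x \leq \bigvee D$. The ``if'' part follows because every open neighborhood $U$ of $x$ is Scott open (d-space), contains $\bigvee D$ (upper set), and hence meets $D$. The ``only if'' part uses that if $x \not\leq \bigvee D$ there is an open (hence Scott open, hence upper) neighborhood of $x$ missing $\bigvee D$, which then cannot meet $D$, contradicting convergence.

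Given this equivalence, take an arbitrary Scott open $U$ and show it is a directed open set of $X$: if $D$ is directed and converges to some $x \in U$, then $x \leq \bigvee D$ by the claim, so $\bigvee D \in U$ since $U$ is an upper set, and then $D \cap U \neq \emptyset$ because $U$ is Scott open. Thus $U$ is directed open, and since $X$ is a directed space, $U$ is open. This yields the missing inclusion, so the topology of $X$ equals the Scott topology and $X$ is a Scott space. The only mildly delicate point is the intermediate claim characterizing convergence of directed sets in a d-space; once that is in hand, both implications drop out.
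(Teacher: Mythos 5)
Your proof is correct and takes essentially the same route as the paper: the substantive step is showing every Scott open set is a directed open set, which the paper does by contradiction using the open set $X\setminus{\downarrow}\bigl(\bigvee D\bigr)$ — exactly the content of your intermediate claim that convergence of a directed set $D$ to $x$ forces $x\leq\bigvee D$. The only difference is presentational: you factor this out as a standalone convergence characterization (and also prove its converse, which is not needed), while the paper inlines the same argument.
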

\begin{proof}
	We only need to show the ``if" part. Let $X$ be a d-space and a directed space, obviously every  open set of $X$ is Scott open of $(X,\leq)$ since $X$ is a d-space. Now take any Scott open set $U$ of $(X,\leq)$ and for any directed set $D$ converges to some point $x$ of $U$. Assume that $D\bigcap U=\emptyset$, then $b=\bigvee^{\uparrow} D\not\in U$. It follows that $x\in X\backslash{\downarrow}b$.  Because $D$ converges to $x$ and $X\backslash{\downarrow}b$ is open in X, there is some $d\in D$ such that  $d\in X\backslash{\downarrow}b$,, a contradiction. Hence the assumption is wrong. It means that $U$ is a directed open set of $X$.  Since $X$ is a directed space, the topology on $X$ is exactly the Scott topology on $(X,\leq)$.
\end{proof}

We list some results about separate continuity and joint continuity.
\begin{theorem}\rm\cite{Lawson85}
	Let $E$ be a $T_{0}$ space. The following conditions are equivalent:
	\begin{enumerate}
		\item $E$ is locally finitary compact.
		\item For all $T_{0}$ space X, if a map from $X\times E$ is separately continuous, then it is jointly continuous.

	\end{enumerate}
	
\end{theorem}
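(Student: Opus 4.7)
This is Lawson's classical separate-versus-joint continuity equivalence, specialized to locally finitary compact $T_{0}$ spaces. The plan is a finitary tube-lemma argument for $(1)\Rightarrow(2)$, and a universal counter-example on $\mathcal{O}(E)\times E$ for the contrapositive of $(2)\Rightarrow(1)$.

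For $(1)\Rightarrow(2)$, I would fix a separately continuous $f\colon X\times E\to Y$, a point $(x_{0},e_{0})$, and an open $V\ni f(x_{0},e_{0})$. Continuity of $f(x_{0},-)$ makes $W=f(x_{0},-)^{-1}(V)$ an open neighbourhood of $e_{0}$, and local finitary compactness of $E$ provides a finite $F\subseteq W$ with $e_{0}\in\mathrm{int}({\uparrow}F)$. For each $e\in F$, continuity of $f(-,e)$ yields an open $U_{e}\ni x_{0}$ with $f(U_{e}\times\{e\})\subseteq V$; set $U=\bigcap_{e\in F}U_{e}$. To verify $f(U\times\mathrm{int}({\uparrow}F))\subseteq V$, take $(x,e')$ in this product and pick $e\in F$ with $e\leq e'$; continuity of $f(x,-)$ gives $f(x,e')\geq f(x,e)\in V$ in the specialization order of $Y$, and openness of $V$ (hence upperness) forces $f(x,e')\in V$.

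For $(2)\Rightarrow(1)$, I would argue by contrapositive using a canonical map. Let $X=\mathcal{O}(E)$ carry the topology generated by the subbase $\{\,\{U : e\in U\} : e\in E\,\}$, which is easily seen to be $T_{0}$, and define $f\colon X\times E\to\mathbb{S}$ by $f(U,e)=1$ iff $e\in U$. Separate continuity is immediate: $f(-,e)$ is continuous by construction of the subbase, and $f(U,-)$ is the indicator of the open set $U$. A basic neighbourhood of a pair $(U_{0},e_{0})$ with $e_{0}\in U_{0}$ has the form $\bigcap_{e\in F}\{U : e\in U\}\times V$ for some finite $F\subseteq U_{0}$ and open $V\ni e_{0}$, and it maps into the open point $\{1\}\subseteq\mathbb{S}$ exactly when $V\subseteq\bigcap\{U'\text{ open}: F\subseteq U'\}={\uparrow}F$. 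Hence joint continuity of $f$ at $(U_{0},e_{0})$ amounts to the existence of a finite $F\subseteq U_{0}$ with $e_{0}\in\mathrm{int}({\uparrow}F)$, i.e.\ to local finitary compactness of $E$ at $e_{0}$ relative to $U_{0}$; if the latter fails anywhere, $f$ is the desired separately-but-not-jointly continuous map.

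The forward direction is a clean finitary analogue of the tube lemma, with ${\uparrow}F$ playing the role of compact tubes and the upper-set property of open sets replacing the usual subcover step. The main obstacle is the converse and, within it, choosing the right subbase on $\mathcal{O}(E)$: fine enough to force separate continuity automatically and coarse enough for joint continuity to encode the finitary compactness condition. The one technical point that has to be checked is the identity $\bigcap\{U'\text{ open}: F\subseteq U'\}={\uparrow}F$ for finite $F$ in a $T_{0}$ space, which couples the neighbourhood filter of $U_{0}$ in $X$ to the saturation of $F$ in $E$.
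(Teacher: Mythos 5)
The paper does not prove this statement at all: it is imported directly from \cite{Lawson85}, so there is no in-paper argument to compare against, and your proposal should be judged on its own. It is correct, and it is essentially the standard proof of Lawson's theorem. The $(1)\Rightarrow(2)$ direction is sound: continuity of $f(x,-)$ gives monotonicity with respect to the specialization preorders, open sets are upper sets, and $U\times\mathrm{int}({\uparrow}F)$ is the required neighbourhood of $(x_0,e_0)$; note your use of $F\subseteq W$ is exactly what guarantees $x_0\in U_e$ for each $e\in F$. The contrapositive of $(2)\Rightarrow(1)$ via the evaluation map on $\mathcal O(E)$ with the point-open subbase also works as you describe, since $X=\mathcal O(E)$ with that subbase is $T_0$ and a basic neighbourhood of $U_0$ is exactly $\bigcap_{e\in F}\{U: e\in U\}$ with $F\subseteq U_0$ finite. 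The one fact you flagged, $\bigcap\{U'\ \text{open}: F\subseteq U'\}={\uparrow}F$ for finite $F$, does hold (in any space, $T_0$ not needed): the inclusion ${\uparrow}F\subseteq\bigcap\{U'\ \text{open}: F\subseteq U'\}$ is immediate, and if $x\notin{\uparrow}F$ then for each $e\in F$ there is an open set containing $e$ but not $x$, and the union of these finitely many opens contains $F$ and misses $x$. Finiteness of $F$ is essential in that step, which is precisely why your construction detects local finitary compactness and nothing weaker.
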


\begin{corollary}\label{sepa}
	Let $X$ be a c-space and $Y$  a $T_{0}$ space. For any $T_{0}$ space $Z$, a map $f\colon X\times Y\rightarrow Z$ is continuous (i.e. jointly continuous) iff it is separately continuous. 
\end{corollary}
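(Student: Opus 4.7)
The forward direction (joint continuity implies separate continuity) is standard: if $f\colon X\times Y\to Z$ is continuous, then for each fixed $x_0\in X$ the map $y\mapsto f(x_0,y)$ is the composite of the continuous inclusion $y\mapsto (x_0,y)$ with $f$, and similarly for each fixed $y_0\in Y$. So I would simply remark on this and move to the nontrivial direction.

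For the converse, the plan is to reduce directly to the preceding theorem of Lawson. First I would recall from the earlier example that every c-space is locally finitary compact; in particular the given c-space $X$ satisfies the hypothesis of the theorem in the role of $E$. The theorem is stated for maps out of a product of the form (arbitrary $T_0$ space) $\times$ (locally finitary compact space), so to apply it I would swap the two factors. Concretely, let $\sigma\colon Y\times X\to X\times Y$ be the canonical homeomorphism $(y,x)\mapsto (x,y)$, and set $g:=f\circ\sigma\colon Y\times X\to Z$. Separate continuity is clearly preserved under $\sigma$, so separate continuity of $f$ gives separate continuity of $g$. Applying the theorem to $g$ with the first space taken to be $Y$ and $E:=X$, we conclude that $g$ is jointly continuous. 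Hence $f=g\circ\sigma^{-1}$ is jointly continuous as well.

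There is no real obstacle here; the statement is an immediate specialization of the previous theorem once one observes that c-spaces are locally finitary compact and that the product is symmetric up to homeomorphism. The only thing worth being careful about is that both variables of $f$ are being tested for separate continuity in the usual sense (fixing one coordinate and varying the other), which matches exactly the hypothesis used in Lawson's theorem.
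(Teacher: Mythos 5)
Your argument is correct and is exactly the reasoning the paper intends: the corollary is an immediate consequence of Lawson's theorem once one notes that every c-space is locally finitary compact, with the swap of factors handled by the canonical homeomorphism just as you describe. The paper leaves this to the reader, so your proposal fills in precisely the expected details.
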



\section{The category of c-spaces}

We now prove  our main result.

\begin{theorem}
	The category of c-spaces with continuous maps (\textbf{CS} for short) is not cartesian closed.	
\end{theorem}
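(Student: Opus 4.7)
The plan is a proof by contradiction. Suppose $\mathbf{CS}$ is cartesian closed; I will derive a contradiction by pinning down the topology of $Y^{X}$ for a suitable pair of c-spaces $X,Y$. By the proposition on exponentials inherited from $\mathbf{Top}$, the exponential $Y^{X}$ has underlying set $[X\to Y]$ equipped with some topology $\tau$, and $\mathit{App}$ is ordinary evaluation.

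First, I would identify the specialization order on $(Y^X,\tau)$. Applying the universal property with the Sierpinski space $\Sigma$ (a c-space) as test object: by Corollary~\ref{sepa}, a continuous map $h\colon\Sigma\times X\to Y$ is precisely a pair $f\leq g$ of continuous maps $X\to Y$ in the pointwise order (joint continuity reduces to separate continuity because $X$ is a c-space). Since continuous maps $\Sigma\to Y^{X}$ amount to pairs $(f,g)$ with $f$ below $g$ in the specialization order of $Y^{X}$, the two orderings must agree: the specialization order on $Y^{X}$ is the pointwise order on $[X\to Y]$.

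Next, I would show that $\tau$ must be the Scott topology on $([X\to Y],\leq_{\mathrm{pw}})$. As a c-space, $Y^{X}$ is in particular a directed space, so by the Lemma it suffices to prove it is a d-space. For each well-ordered chain $(f_{\alpha})_{\alpha\leq\lambda}$ in $[X\to Y]$ that is closed under pointwise suprema at limit stages, the ordinal space $\lambda+1$ with its Scott topology is a c-space, and the associated evaluation map $(\lambda+1)\times X\to Y$ is separately continuous, hence jointly continuous by Corollary~\ref{sepa}. The universal property then yields a $\tau$-continuous map $\lambda+1\to Y^{X}$ sending $\alpha\mapsto f_{\alpha}$, and continuity at $\lambda$ forces every $\tau$-open containing $f_{\lambda}$ to contain some $f_{\alpha}$. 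Combined with Iwamura's lemma and a transfinite induction on the length of an arbitrary pointwise-directed supremum, this shows every $\tau$-open is Scott open. Hence $Y^{X}$ is a d-space, $\tau$ is the Scott topology on $([X\to Y],\leq_{\mathrm{pw}})$, and since $(Y^X,\tau)$ must be a c-space, $[X\to Y]$ must be a continuous dcpo under the pointwise order.

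To finish, I would choose $X$ and $Y$ to be continuous dcpos (equipped with their Scott topology, hence c-spaces) for which $[X\to Y]$ fails to be a continuous dcpo under pointwise order --- such pairs are the classical witnesses to the failure of cartesian closedness of the category of continuous dcpos. This contradicts the conclusion of the previous paragraph. The main technical obstacle is the middle step: engineering test c-spaces rich enough (ordinal chains completed at limit stages by pointwise suprema) so that Corollary~\ref{sepa} applies and the universal property truly forces $\tau$ to be exactly the Scott topology, rather than leaving room for a strictly finer c-space topology that could still satisfy the universal property.
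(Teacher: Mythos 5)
Your overall strategy is essentially the paper's: assume an exponential $[X\to Y]_{\tau}$ exists, use the Sierpinski space together with Corollary~\ref{sepa} to identify the specialization order of $\tau$ with the pointwise order, force $\tau$ to be the Scott topology via the directed-space/d-space lemma, and then contradict the known non-continuity of a classical function space. The one place you genuinely diverge is the d-space step, which you yourself flag as the main obstacle: you propose testing against ordinal spaces $\lambda+1$, closing chains under pointwise suprema at limit stages, and then reducing arbitrary directed suprema to chains via Iwamura's lemma and transfinite induction. That route can be pushed through (a complete chain with the Scott topology is a continuous lattice, hence a c-space, and the chain-to-directed-set reduction is standard once one notes that $[X\to Y]$ is a dcpo under the pointwise order, so the intermediate suprema exist), but the paper avoids all of this machinery with a single, simpler test object: given a directed family $(g_i)_{i\in I}$, take the set $I\cup\{\infty\}$ with the topology generated by the sets ${\uparrow}i\cup\{\infty\}$; this is a c-space, the map sending $(i,x)\mapsto g_i(x)$ and $(\infty,x)\mapsto \bigl(\bigvee_i g_i\bigr)(x)$ is separately, hence jointly, continuous by Corollary~\ref{sepa}, and transposing it shows at once that $(g_i)_{i\in I}$ converges to its supremum in $\tau$ --- no ordinals, no Iwamura, no cardinality induction. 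Finally, where you appeal abstractly to ``classical witnesses'', the paper fixes the concrete pair $X=Y=\mathbb{Z}^{-}$ (non-positive integers with the Scott topology) and cites Jung for the fact that $[\mathbb{Z}^{-}\to\mathbb{Z}^{-}]$ is not a continuous dcpo, hence not a c-space under the Scott topology; you should name such a witness explicitly to make the final contradiction complete.
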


\begin{proof}
	Let $\mathbb{Z}^{-}$ be the set of non-positive integers with the Scott topology.
	Assume $\textbf{CS}$ is a ccc. It is easy to see that the topological product $X\times Y$ is the categorical product because $X\times Y$ is a c-space.  Since $\textbf{CS}$ is cartesian closed, there exists  exponential topology $\tau$ on $[\mathbb{Z}^{-}\rightarrow\mathbb{Z}^{-}]$, we denote by $[\mathbb{Z}^{-}\rightarrow\mathbb{Z}^{-}]_{\tau}$. Then for any c-space $Y$ and any  map $f\colon Y\times \mathbb{Z}^{-}\rightarrow \mathbb{Z}^{-}$,  $f$ is continuous iff $\bar{f}\colon Y\rightarrow [\mathbb{Z}^{-}\rightarrow\mathbb{Z}^{-}]_{\tau}$ is continuous. 
	
	\textbf{Claim $1$:} The specialization order on $[\mathbb{Z}^{-}\rightarrow\mathbb{Z}^{-}]_{\tau}$ is equal to the pointwise order.
	
	For any $g_1, g_2\in [\mathbb{Z}^{-}\rightarrow\mathbb{Z}^{-}]_{\tau}$ with $g_1\leq_{\tau}g_2 ~(g_1\neq g_2)$, take $Y=\mathbb{S}$ with Scott topology. A map $\theta\colon \mathbb{S}\rightarrow [\mathbb{Z}^{-}\rightarrow\mathbb{Z}^{-}]_{\tau}$ is defined as $\theta(1)=g_2, \theta(0)=g_1$.
	Then $\theta$ is continuous. Hence $\hat{\theta}:\mathbb{S}\times \mathbb{Z}^{-}\rightarrow\mathbb{Z}^{-}$ is continuous. It follows that $$g_1(x)=\hat{\theta}(0,x)\leq \hat{\theta}(1,x)=g_2(x)$$ for any $x\in X$.
	
	For any $g_1, g_2\in [\mathbb{Z}^{-}\rightarrow\mathbb{Z}^{-}]_{\tau}$ with $g_1\leq g_2$, consider a continuous map $f\colon\mathbb{S}\times \mathbb{Z}^{-}\rightarrow\mathbb{Z}^{-}$ which is defined as $f(0,x)=g_1(x), f(1,x)=g_2(x) ~\forall x\in X$. It follows that the transpose
	map $\bar{f}$ is continuous hence monotone, which implies that $$g_1=\bar{f}(0)\leq_{\tau}\bar{f}(1)=g_2.$$
	
	\textbf{Claim $2$:} $[\mathbb{Z}^{-}\rightarrow\mathbb{Z}^{-}]_{\tau}$ is a d-space.
	
	We only need to show that every directed subfamily $(g_i)_{i\in I}$ of $[\mathbb{Z}^{-}\rightarrow\mathbb{Z}^{-}]_{\tau}$ converges to its supremum $g=\bigvee_{i\in I}^{\uparrow}g_i$. Let $Y$ be the set $I\cup\{\infty\}$ with the topology generated by
	$\{{\uparrow}i\cup\{\infty\}:i\in I\}$. Obviously $Y$ is a c-space. Consider a map $f\colon Y\times \mathbb{Z}^{-}\rightarrow\mathbb{Z}^{-}$ which is defined as $f(\infty,x)=g(x), f(i,x)=g_i(x)$.
	It is easy to see that $f$ is continuous since $f$ is continuous iff it is separately continuous by \ref{sepa}. It follows that $\bar{f}\colon Y\rightarrow [\mathbb{Z}^{-}\rightarrow\mathbb{Z}^{-}]_{\tau}$ is continuous, and so $(g_i=\bar{f}(i))_i$ converges to $\bar{f}(\infty)=g$.
	
	Therefore $\tau$ is just the Scott topology on $[\mathbb{Z}^{-}\rightarrow\mathbb{Z}^{-}]$.
	But from \cite{Jung89} we know that $[\mathbb{Z}^{-}\rightarrow\mathbb{Z}^{-}]$ is not a continuous domain, hence it is not a c-space, a contradiction.
\end{proof}

\begin{theorem}\rm\cite{Gierz03}
	A meet continuous dcpo is a continuous dcpo iff it is a quasicontinuous dcpo.
\end{theorem}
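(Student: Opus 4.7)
The plan is to prove both directions of the biconditional, with the easy direction (continuous implies quasicontinuous) not requiring meet continuity and the hard direction (meet-continuous plus quasicontinuous implies continuous) being where meet continuity is essential.

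For the easy direction, let $D$ be a continuous dcpo. Given $x \in D$ and a Scott-open set $U$ containing $x$, continuity supplies some $y \ll x$ with $y \in U$. Setting $F = \{y\}$, the Scott-open set $\{z : y \ll z\}$ contains $x$ and sits inside $\uparrow y = \uparrow F$, while $\uparrow F \subseteq U$ because $U$ is an upper set. Thus $x \in \mathrm{int}(\uparrow F) \subseteq U$, witnessing quasicontinuity (and meet continuity is not used here).

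For the hard direction, assume $D$ is meet-continuous and quasicontinuous. Fix $x \in D$ and a Scott-open $U$ with $x \in U$; the aim is to find $y \in U$ with $y \ll x$. First, use quasicontinuity to choose a finite set $F_0$ with $x \in \mathrm{int}(\uparrow F_0) \subseteq \uparrow F_0 \subseteq U$. The crux is to collapse this finite approximation $F_0$ to a single point way-below $x$. The plan is to iterate quasicontinuity inside $\mathrm{int}(\uparrow F_0)$ to build a refinement family of finite sets $F_n$ with $x \in \mathrm{int}(\uparrow F_{n+1}) \subseteq \uparrow F_{n+1} \subseteq \mathrm{int}(\uparrow F_n)$, apply Rudin's Lemma to extract a directed selector $A$ hitting every $F_n$, and then invoke meet continuity to force $\sup A = x$. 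Cofinality of $A$ in the refinement family will then force some $a \in A$ to satisfy $a \ll x$, and such $a$ lies in $\uparrow F_0 \subseteq U$.

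The main obstacle is precisely this descent from a finite approximation to a point approximation. Quasicontinuity alone does not force any member of $F_0$ to be way-below $x$, since the $F_0$ it supplies may be a genuine antichain of incomparable elements each only jointly approximating $x$. What meet continuity contributes is that every Scott-open neighborhood of $x$ intersects $\downarrow x$ Scott-densely below $x$, equivalently, that directed suprema distribute over principal down-sets of $x$; this is exactly what guarantees that the Rudin selector $A$ above actually has supremum $x$ and hence that one of its elements, sitting cofinally in the refinement, is way-below $x$. Once such a way-below element has been produced in every open neighborhood of $x$, directedness of $\{y : y \ll x\}$ and the equality $x = \sup\{y : y \ll x\}$ follow by running the same argument on intersections of basic Scott-open sets determined by finitely many way-below elements.
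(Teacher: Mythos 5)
Your easy direction is fine, and the paper itself gives no proof of this statement at all -- it is quoted from \cite{Gierz03} (where it appears as Theorem III-3.11, due to Kou, Liu and Luo) -- so the only question is whether your hard direction stands on its own, and it does not. The two decisive steps are exactly the ones you assert without justification. First, ``meet continuity forces $\sup A = x$'' fails: quasicontinuity produces finite sets $F_n$ with $x\in\mathrm{int}(\uparrow F_n)\subseteq\uparrow F_n\subseteq U$, but nothing places the elements of $F_n$ below $x$ (in the standard quasicontinuous non-continuous examples they are not), so the Rudin selector $A$ need not satisfy $\sup A\le x$. All one actually gets is $\sup A\in\bigcap_n\uparrow F_n$, and for a countable chain of refinements inside one fixed $U$ this intersection need not equal $\uparrow x$ either; the identity $\bigcap\{\uparrow F: F\ll x\}=\uparrow x$ uses the whole filtered family and in any case yields $\sup A\ge x$, not $\sup A\le x$. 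To even get a selector inside ${\downarrow}x$ you must already use meet continuity in a concrete form, namely the characterization that $x\in\mathrm{int}_\sigma\uparrow(U\cap{\downarrow}x)$ for every Scott-open $U\ni x$, which allows each $F$ to be replaced by a finite subset of $U\cap{\downarrow}x$ that is still way below $x$; your informal gloss of meet continuity is never actually applied.

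Second, and more seriously, even granting a directed $A\subseteq{\downarrow}x$ with $\sup A=x$ meeting every $F_n$, the inference ``cofinality forces some $a\in A$ with $a\ll x$'' is precisely what separates continuous dcpos from arbitrary ones: every dcpo has directed subsets of ${\downarrow}x$ with supremum $x$ (for instance $\{x\}$ itself), and $a\in F_n$ only tells you that a directed set with supremum above $x$ meets $\uparrow F_n$ -- possibly at a different member of $F_n$ -- not that it meets $\uparrow a$. So no element of $A$ need be way below $x$, and the crux you correctly identify (collapsing a finite approximation to a point approximation) is not achieved by the Rudin-selector plan. What is missing is a genuine lemma in which meet continuity does real work, of the type used in the cited proof: in a meet continuous dcpo, $\mathrm{int}_\sigma\uparrow F\subseteq\bigcup_{y\in F}\mathrm{int}_\sigma\uparrow y$ for every finite $F$. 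With such a lemma, quasicontinuity immediately provides, in every Scott-open neighbourhood of $x$, a single point $y\ll x$, and your concluding remarks about directedness of $\{y: y\ll x\}$ and $x=\sup\{y: y\ll x\}$ then go through. As written, the central step of your hard direction would fail.
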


Notice that $[\mathbb{Z}^{-}\rightarrow\mathbb{Z}^{-}]$ is a meet continuous semilattice which is not continuous, hence it is not a quasicontinuous dcpo. Then we have the following result.
\begin{corollary}
	The category of locally finitary compact spaces with continuous maps is not cartesian closed.
\end{corollary}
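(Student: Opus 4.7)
The plan is to adapt the proof of the main theorem, replacing the category \textbf{CS} by the category of locally finitary compact spaces (\textbf{LFC} for short), and then to close the argument using the preceding theorem on meet continuous dcpos.

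First, I would suppose for contradiction that \textbf{LFC} is cartesian closed. Since $\mathbb{Z}^-$ with the Scott topology is a continuous dcpo, it is a c-space and hence locally finitary compact, so the exponential $[\mathbb{Z}^-\to\mathbb{Z}^-]_\tau$ exists in \textbf{LFC}. The two test spaces used in the main theorem, the Sierpi\'nski space $\mathbb{S}$ and $Y=I\cup\{\infty\}$ with the topology generated by $\{{\uparrow}i\cup\{\infty\}:i\in I\}$, are c-spaces, and the binary product of c-spaces is again a c-space. Hence $\mathbb{S}\times\mathbb{Z}^-$ and $Y\times\mathbb{Z}^-$ are c-spaces, hence locally finitary compact, so the topological products genuinely serve as categorical products in \textbf{LFC} and the universal property of the exponential applies to them exactly as before.

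Next, I would rerun Claim~1 verbatim to show that the specialization order on $[\mathbb{Z}^-\to\mathbb{Z}^-]_\tau$ is the pointwise order, and rerun Claim~2, invoking Lawson's theorem (the LFC version of which Corollary~\ref{sepa} is the c-space restriction) to ensure that the map $f\colon Y\times\mathbb{Z}^-\to\mathbb{Z}^-$ built from a directed family is jointly continuous. This yields that $[\mathbb{Z}^-\to\mathbb{Z}^-]_\tau$ is a d-space, and together with the lemma characterising directed d-spaces as Scott spaces it forces $\tau$ to be the Scott topology on the pointwise-ordered dcpo $[\mathbb{Z}^-\to\mathbb{Z}^-]$.

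Finally, since $[\mathbb{Z}^-\to\mathbb{Z}^-]_\tau$ is by assumption locally finitary compact and $\tau$ is the Scott topology of a dcpo, the poset $[\mathbb{Z}^-\to\mathbb{Z}^-]$ must be quasicontinuous (this is the standard correspondence between quasicontinuous dcpos and dcpos whose Scott topology is locally finitary compact). On the other hand, $[\mathbb{Z}^-\to\mathbb{Z}^-]$ is a meet continuous semilattice, since pointwise infima of continuous functions into $\mathbb{Z}^-$ are again continuous, and by \cite{Jung89} it is not a continuous dcpo; the preceding theorem on meet continuous dcpos then rules out quasicontinuity, contradicting the previous sentence. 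The main obstacle is the invocation of this quasicontinuity/locally-finitary-compact Scott correspondence; once that is granted, the argument is essentially a literal transcription of the proof for c-spaces.
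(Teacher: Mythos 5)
Your proposal is correct and follows essentially the same route as the paper: the paper's own argument is exactly to rerun the main theorem with locally finitary compact spaces in place of c-spaces (the test spaces being c-spaces, hence locally finitary compact, and Lawson's theorem supplying joint continuity), and then to note that $[\mathbb{Z}^{-}\rightarrow\mathbb{Z}^{-}]$ is meet continuous but not continuous, hence not quasicontinuous, so its Scott space cannot be locally finitary compact. The only difference is one of presentation: you spell out the quasicontinuity--locally finitary compact correspondence for Scott spaces, which the paper invokes implicitly, and that fact is indeed standard, so no gap remains.
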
	

\section*{Acknowledgement}
We are grateful to the anonymous referees for useful comments and suggestions. We also would like to thank Yuxu Chen for helpful discussions.

\bibliographystyle{entics}
\bibliography{note}

\begin{thebibliography}{10}
\providecommand{\url}[1]{\texttt{#1}}
\providecommand{\urlprefix}{ }
\providecommand{\eprint}[2][]{\url{#2}}

\bibitem{JungA94}
Abramsky, S. and A.~Jung, \emph{Domain {T}heory}, in: \emph{Handbook of {L}ogic
  in {C}omputer {S}cience, {V}ol. 3}, pages 1--168, Oxford Univ. Press, New
  York (1994).
\newline\urlprefix\url{https://www.cs.bham.ac.uk/~axj/pub/papers/handy1.pdf}

\bibitem{Battenfeld07}
Battenfeld, I., M.~Schr{\"o}der and A.~Simpson, \emph{A convenient category of
  domains}, Electronic Notes in Theoretical Computer Science \textbf{172},
  pages 69--99 (2007).
\newline\urlprefix\url{https://doi.org/10.1016/j.entcs.2007.02.004}

\bibitem{Chen22}
Chen, Y., H.~Kou and Z.~Lyu, \emph{Continuity on directed spaces}  (2022).
\newline\urlprefix\url{https://doi.org/arXiv:2203.06344[math.GN]}

\bibitem{Erne91}
Ern{\'e}, M., \emph{The {ABC} of order and topology}, Category theory at work
  \textbf{18}, pages 57--83 (1991).

\bibitem{Erne09}
Ern{\'e}, M., \emph{Infinite distributive laws versus local connectedness and
  compactness properties}, Topology and Its Applications \textbf{156}, pages
  2054--2069 (2009).
\newline\urlprefix\url{https://doi.org/10.1016/j.topol.2009.03.029}

\bibitem{ershov73}
Ershov, Y.~L., \emph{The theory of {A}-spaces}, Algebra and Logic \textbf{12},
  pages 209--232 (1973).
\newline\urlprefix\url{https://doi.org/10.1007/BF02218570}

\bibitem{Ershov93}
Ershov, Y.~L., \emph{Theory of domains and nearby}, Springer Berlin Heidelberg
  (1993).
\newline\urlprefix\url{https://doi.org/10.1007/BFb0039696}

\bibitem{Gierz03}
Gierz, G., K.~H. Hofmann, K.~Keimel, J.~D. Lawson, M.~Mislove and D.~S. Scott,
  \emph{Continuous {L}attices and {D}omains}, volume~93 of \emph{Encyclopedia
  of Mathematics and its Applications}, Cambridge University Press, Cambridge
  (2003), ISBN 0-521-80338-1.
\newline\urlprefix\url{https://doi.org/10.1017/CBO9780511542725}

\bibitem{Goubault13}
Goubault-Larrecq, J., \emph{Non-{H}ausdorff {T}opology and {D}omain {T}heory},
  volume~22 of \emph{New Mathematical Monographs}, Cambridge University Press,
  Cambridge (2013), ISBN 978-1-107-03413-6.
\newline\urlprefix\url{https://doi.org/10.1017/CBO9781139524438}

\bibitem{Jung89}
Jung, A., \emph{Cartesian closed categories of domains}, 66, Citeseer (1989).

\bibitem{Lawson85}
Lawson, J.~D., \emph{${T}_0$-spaces and pointwise convergence}, Topology and
  Its Applications \textbf{21}, pages 73--76 (1985).
\newline\urlprefix\url{10.1016/0166-8641(85)90059-8}

\bibitem{KouY15}
Yu, Y. and H.~Kou, \emph{Directed spaces defined through ${T}_0$ spaces with
  specialization order}, Journal of Sichuan University(Natural Science
  Edition)(Chinese) \textbf{52}, pages 217--222 (2015).

\end{thebibliography}
  
  
  
  
  

\end{document}